\theoremstyle{plain}
\newtheorem{theorem}{Theorem}[section]
\newtheorem*{theorem*}{Theorem}
\newtheorem{corollary}{Corollary}
\begin{document}

\title{Constraints Between Equations of State and mass-radius relations in General Clusters of Stellar Systems}
\author{Y. X. Martins}
\affiliation{Departamento de Matem\'atica, ICEx, Universidade Federal de Minas
    Gerais,    Av. Ant\^onio Carlos 6627, Pampulha, CP 702, CEP 31270-901, Belo Horizonte, MG, Brazil}
\author{L. F. A. Campos}
\affiliation{Departamento de F\'isica, ICEx, Universidade Federal de Minas
    Gerais, Av. Ant\^onio Carlos 6627, Pampulha, CP 702, CEP 31270-901, Belo Horizonte, MG, Brazil}
\author{D. S. P. Teixeira}
\affiliation{Departamento de F\'isica, ICEx, Universidade Federal de Minas
    Gerais, Av. Ant\^onio Carlos 6627, Pampulha, CP 702, CEP 31270-901, Belo Horizonte, MG, Brazil}
\author{R. J. Biezuner}
\affiliation{Departamento de Matem\'atica, ICEx, Universidade Federal de Minas
    Gerais,    Av. Ant\^onio Carlos 6627, Pampulha, CP 702, CEP 31270-901, Belo Horizonte, MG, Brazil}

\begin{abstract}
We prove three obstruction results on the existence of equations of state in clusters of stellar systems fulfilling mass-radius relations and some additional bound (on the mass, on the radius or a causal bound). The theorems are proved in great generality. We start with a motivating example of TOV systems and apply our results to stellar systems arising from experimental data. 
\end{abstract}

\pacs{Valid PACS appear here}
\maketitle

\section{Introduction}

$\quad\;\,$The structure of general relativistic and spherically
symmetric isotropic stars is modeled by the Tolman-Oppenheimer-Volkoff (TOV) equations \cite{weinberg}, described in terms of its density $\rho$ and pressure $p$ (setting $G=1$ and $c=1$):
\begin{eqnarray}
\label{tov1}
p'(r) & =- & \frac{\Bigl(\rho(r)+p\Bigr)\Bigl(M(r)+4\pi r^{3}p\Bigr)}{r^{2}\bigl(1-\frac{2M(r)}{r}\bigr)}\\
\label{tov2} 
M'(r) & = & 4\pi r^{2}\rho(r).
\end{eqnarray}
These equations are partially uncoupled and can be coupled through an equation of state. A typical example is the polytropic equation of state
\begin{equation}\label{politropic}
f(p,\rho,k,k_0,\gamma) = p - k\rho^\gamma - k_0 =0,
\end{equation}
where $k, k_0 \in\mathbb{R}$  are the \emph{polytropic constant} and the \emph{stiffness constant}, and $\gamma=(n+1)/n \in \mathbb{Q}$ is the \emph{polytrope exponent}. Since $f(p,\rho) = 0$ can obviously be solved for $p$ and $\rho$, it can be used to couple a TOV system. Under this coupling, we say that $(p,\rho)$ corresponds to a \emph{polytropic TOV system}.

Now, we recall that in thermodynamic systems where pressure and density are related by a solvable equation of state, the speed of sound within the system can be defined by $v :=\sqrt{ \partial p/\partial \rho}$. It folows from (\ref{politropic}) that $v=\sqrt{k\gamma p^{\gamma-1}}$ in polytropic TOV systems. A TOV system is \emph{causal} if the speed of sound does not exceeds the speed of light, that is, $k\gamma\rho^{\gamma - 1} < 1$. In particular, 
\begin{equation}\label{causalcentro}
k\gamma\rho_c^{\gamma - 1}<1,
\end{equation}
where $\rho_c=\rho(0)$ is the density at the origin.

Suppose a star of radius $R$ is modeled by a TOV system, and let the \emph{mass of the system} to be  $M=M(R)$, where $M(r)$ is the function in (\ref{tov2}). Small stars usually admit \emph{mass-radius relations} equations of state, $g(M,R,\delta)=0$, relating $M$ to $R$ and other parameters \cite{stellar_structure}. For instance, it is known  \cite{weinberg,stellar_structure} that, in the Newtonian limit, polytropic TOV systems satisfy
\begin{equation}\label{polimassaraio}
g(M,R,k,n,\rho_c)=M- A(k,n,\rho_c)R^2,
\end{equation}
where
\begin{equation*}
A(k,n,\rho_{c})=\Bigl(\frac{4\pi}{(n+1)k}\Bigr)^{3/2}\rho_{c}^{(n-3)/2n}\frac{\vert p'(R)\vert}{\rho_{\mathrm{rel}}}
\end{equation*}
and $\rho_{\mathrm{rel}}=\rho(R)/\rho_{c}$ is the \emph{relative density}.

We show that there are obstructions for a Newtonian polytropic TOV system to be causal. 

This follows from isolating $\rho_c$ can be isolated in (\ref{polimassaraio}) and substituting in (\ref{causalcentro}), leading to the following obstruction on the possible polytropic constants 
\begin{equation*}
k<\frac{n}{(n+1)}\Bigl(\frac{R^{4}}{n^{3}M^{2}}\beta\Bigr)^{\frac{1}{n}},\;\text{where}\;\beta=64\pi^{3}\Bigl(\frac{\vert p'(R)\vert}{\rho_{\mathrm{rel}}}\Bigr)^{2}.
\end{equation*}

In this article, we show that this obstruction is not particular to polytropic Newtonian TOV systems. We define precisely \emph{clusters} of stellar systems (of which TOV systems particular examples), equations of state and mass-radius relations, in in Section \ref{definitions}. Then, in Section \ref{statement_proof}, we state and prove three different obstruction theorems which formalize the rough claim below. Finally, in Section \ref{applications}, we apply our results to stellar systems arising from experimental data.
\begin{theorem*}[Roughly]\label{thm_obstructions}
Consider a small radius stellar system with a mass-radius relation $g(M,R,\epsilon)$. Each constraint on $\epsilon$ induces an obstruction on the possible equations of state depending on $\epsilon$ that can be introduced in that system.
\end{theorem*}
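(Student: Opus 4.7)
The plan is to abstract, in the setting of a general cluster of stellar systems, the derivation that produced the bound on the polytropic constant $k$ in the Newtonian polytropic TOV example. That derivation used three ingredients: the mass-radius relation (\ref{polimassaraio}) expressed one equation-of-state parameter ($\rho_c$) in terms of $M$, $R$ and the remaining parameters; the causal inequality (\ref{causalcentro}) was a simple inequality on those same parameters; and substituting the first into the second produced an inequality involving $M$, $R$ and the parameters of the equation of state alone. The abstract proof will follow the same three-step template, with the third step being adapted to the particular type of constraint under consideration.

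First, I would fix a stellar system in a cluster equipped with an equation of state $f(p,\rho,\alpha)=0$ depending on parameters $\alpha$ and a mass-radius relation $g(M,R,\epsilon)=0$ whose parameter vector $\epsilon$ is allowed to share components with $\alpha$. The hypothesis that $\epsilon$ is constrained will be formalized by an inequality (or system of inequalities) of the form $h(M,R,\epsilon)\lessgtr 0$. The three formal theorems in Section \ref{statement_proof} should then correspond to choosing $h$ to encode, respectively, an upper bound on the mass, an upper bound on the radius, and a causal bound on the speed of sound $v=\sqrt{\partial p/\partial \rho}$.

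Second, I would apply the implicit function theorem to $g(M,R,\epsilon)=0$ in order to isolate a distinguished component $\epsilon_{i_0}$ of $\epsilon$ as a function $\epsilon_{i_0}=\Psi(M,R,\epsilon_{\neq i_0})$. This mimics isolating $\rho_c$ in the motivating example and is the step at which a non-degeneracy assumption $\partial g/\partial \epsilon_{i_0}\neq 0$ has to be imposed. Substituting $\Psi$ into the constraint produces the desired obstruction, an inequality
\begin{equation*}
\tilde h(M,R,\epsilon_{\neq i_0})\lessgtr 0,
\end{equation*}
which rules out an explicit region of equation-of-state parameters. The three specialisations are then obtained by plugging in the three choices of $h$ described above.

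The main obstacle I expect is not the substitution step, which is essentially mechanical, but the uniform formulation required to cover \emph{arbitrary} clusters. One has to identify the minimal regularity conditions on $f$, $g$ and $h$ (smoothness, properness, or merely algebraic solvability as in (\ref{politropic})) under which the implicit function argument goes through, and to track carefully how the overlap between the parameter vectors $\alpha$ and $\epsilon$ translates into the parameter space of the final inequality, so that the word \emph{obstruction} has a precise meaning on the space of admissible equations of state. Once the definitions of Section \ref{definitions} are in place, I expect each of the three formal obstruction theorems to follow as a short corollary of this general template.
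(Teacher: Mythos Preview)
Your three-step template (solve $g$ for a parameter via IFT, substitute into a constraint $h$, read off an inequality on the remaining parameters) is indeed the skeleton of the motivating polytropic computation, and it is close to how the paper proves Theorem~\ref{bound}. However, two points keep the proposal from matching the paper.

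First, your identification of the three formal theorems is off. They are \emph{not} ``mass bound / radius bound / causal bound'' taken one at a time. In the paper, Theorem~\ref{bound} assumes simultaneous upper bounds on $M$ \emph{and} $R$; Theorem~\ref{bound2} assumes a bound on $M'=\partial_R M$ together with a bound on $R$; Theorem~\ref{bound3} assumes a causal bound together with a bound on $R$. The mechanism in Theorem~\ref{bound} is also not quite ``substitute $\Psi$ into $h$'': the constraint $h$ there is purely $M\le m$, $R\le r$ and carries no $\epsilon$, so substitution into $h$ would do nothing. Instead the paper solves $g$ for $\epsilon=\eta(M,R)$ and uses the \emph{monotonicity} of $\eta$ (the ``locally monotone'' hypothesis) to push the bounds $M\le m$, $R\le r$ forward to a bound $\epsilon\le\epsilon_0$, and then the MLI hypothesis on $f$ propagates that bound to the state function. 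Monotonicity, not mere solvability, is the engine; your template should make this explicit rather than leaving it inside ``regularity conditions''.

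Second, and more seriously, your plan omits the \emph{continuity equation} $M'(R)=F(R,\rho)$, which is the new ingredient separating Theorems~\ref{bound2} and~\ref{bound3} from Theorem~\ref{bound}. In those two theorems the state function $f(p,\rho,\epsilon)$ need \emph{not} share any parameter with $g(M,R,\delta)$; the link between them is produced by solving the continuity equation for $\rho$ in terms of $\partial_R M(R,\delta)$, so that $\rho$ itself becomes a function of $R$ and $\delta$. Only then can a bound on $M'$ (Theorem~\ref{bound2}) or a causal bound $v^2(\rho,\epsilon)\le v_0^2$ (Theorem~\ref{bound3}) be converted into a bound on $f$ or on $\epsilon$. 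Your template, which relies solely on an overlap between the parameter vectors $\alpha$ and $\epsilon$, cannot reproduce these two results without adding this step.
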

Before giving the precise statement and the proof, let us notice that this claim is reasonable. Generally, mass-radius relations are closely related to the atomic nature of the star, as stellar systems with different atomic constitution obey different relations \cite{stellar_structure,RMR_polinomial_3}. Reciprocally, constraints on the parameters of a mass-radius relation provide information about the atomic structure of the system. For instance, Newtonian polytropic stars satisfying (\ref{polimassaraio}), but not the Chandrasekhar limit (resp. Oppenheimer-Volkoff limit) cannot be stable white dwarfs (resp. stable neutron stars); there are also bounds on $n$, of course \cite{weinberg}. On the other hand, equations of state arise from the statistic mechanics of the atomic structure. Summarizing, mass-radius relations with constraints restrict the atomic structure and, therefore, the possible equations of state, which is precisely the content of Theorem \ref{thm_obstructions}.

\section{Definitions}\label{definitions}

A \emph{stellar system} is defined as a pair $(p,\rho)$ of piecewise differentiable functions $I\rightarrow \mathbb{R}$. It is natural to define a \emph{cluster of stellar systems} of degree $(k,l)$ as a vector subspace $\mathrm{Stellar}^{kl}(I)$ of $C_{pw}^{k}(I)\times C_{pw}^{l}(I)$. It is important to notice that $C_{pw}^{k}(I)$  has a canonical generalized norm, which is a map satisfying the norm axioms but possibly taking values at infinity. In fact, if $f\in C_{pw}^{k}(I)$, $\Vert f\Vert_{k}:=\sup_{I}\vert f^{(k)}(t)\vert$ is such a generalized norm. Like classical norms, generalized norms induce a topology which \cite{generalized_norm} can be characterized as the finest locally convex topology that makes sum and scalar multiplication continuous. Therefore, $\mathrm{Stellar}^{kl}(I)$ has a canonical locally convex space structure. 

We are interested in stellar systems having well defined notions of mass and of radius. This leads us to consider clusters of stellar systems that becomes endowed with maps $M,R:\mathrm{Stellar}^{kl}(I)\rightarrow\mathbb{R}$ assigning to each system $(p,\rho)$ its mass and its radius. In TOV systems, $M(p,\rho)$ is given by (\ref{tov2}), and its inverse is $R(p,\rho)$. Looking at these expressions, we see that it is natural to assume $M$ and $R$ at least piecewise continuous. In other words, $M,R\in C_{pw}^{0}(\mathrm{Stellar}^{kl}(I))$. We say that a (locally convex) subspace $\mathrm{Bound}^{kl}(I)\subset\mathrm{Stellar}^{kl}(I)$ has \emph{mass bounded from above} (resp. \emph{radius bounded from above}) if when restricted to it the function $M$ (resp. $R$) is bounded from above. Similarly, we could define subspaces with mass and radius bounded from below. 

A \emph{state function} for a cluster of stellar systems is a function $f:\mathrm{Stellar}^{kl}(I)\times E\rightarrow\mathbb{R}$, where $E$ is a topological vector space of parameters. We say that a state function is \emph{locally integrable at $p$} if the corresponding equation of state $f(p,\rho,\epsilon)=0$ can be locally solved for $p$. In other words, if there is a neighborhood $U$ of $p$, open sets $V\subset C_{pw}^{l}(I)$ and $W\subset E$, and a function $\xi:V\times W\rightarrow U$ such that $f(\xi(\rho,\epsilon),\rho,\epsilon)=0$. additionally, if $\xi$ is monotone in both variables we say that $f$ is a \emph{monotonically locally integrable (MLI)} state function.

When a locally integrable state function $f$ is such that $\xi$ is differentiable (in some sense to be specified below), we can define the \emph{squared speed of sound} as $v^2=\partial_{\rho}\xi$. We will be interested in situations in which both $\xi$ and $v$ are monotone. A state function satisfying these properties is \emph{fully monotonically locally integrable}.  

The notion of differentiability of $f$ (and, therefore, of derivative) is linked to the nature of the space of parameters $E$, leading to different generalizations of the Implicit Function Theorem (IFT). These versions of the IFT imply that any function whose derivative in the direction of $p$ satisfies mild conditions is locally integrable. 

We remark that some classical generalizations of IFT, such as the IFT for Banach spaces \cite{IFT_cartan} and the Nash-Moser Theorem \cite{Nash_Moser} for Fréchet spaces, cannot be used here, because $\mathrm{Stellar}^{kl}(I)$ isn't either Banach or  Fréchet. General contexts that apply here are when $E$ is an arbitrary topological vector space and, more concretely, when $E$ is locally convex (see \cite{IFT_geral} and \cite{IFT_loc_convex}, respectively). We will work in an intermediary context: when $E$ is Banach, so that by embedding $\mathbb{R}$ in $E$ we can regard $f$ as a map $f:\mathrm{Stellar}^{kl}(I)\times E\rightarrow E$ and define the directional derivative $\partial_{p}f$ as usual for maps between Banach spaces, without the technicalities needed in the general situations. As proved in \cite{IFT_adequado}, the classical IFT holds, meaning that if $\partial_{p}f\neq0$, then $f$ is a locally integrable state function in a neighborhood of that point and $\xi$ is differentiable (in a certain generalized sense). This IFT also gives an expression for the derivative of $\xi$ at each point which depends on the derivative of $f$ with respect to the other variables at points of the neighborhood where $\xi$ is defined, and $\xi$ is monotone if $\partial _i f \neq 0$ in those points.

A \emph{mass-radius function} for a cluster of stellar systems is a function $g:C_{pw}^{0}(\mathrm{Stellar}^{kl}(I))\times C_{pw}^{0}(\mathrm{Stellar}^{kl}(I))\times F\rightarrow\mathbb{R}$, where $F$ is a topological vector space of parameters, such that the mass-radius relation $g(M,R,\delta)=0$ can be locally solved for some $\delta$, in that we can locally write $\delta=\eta(M,R)$. If $\eta$ is monotone, we say that $g$ is a \emph{locally monotone mass-radius function}. As in the previous case, if  $F$ is Banach, then a mass-radius function can be obtained by requiring $\partial_{\delta}g\neq0$, and locally monotone if the other partial derivatives also do not vanish. 

\section{Statement and Proof} \label{statement_proof}

\begin{theorem}\label{bound}
Let  $\mathcal{C}=(\mathrm{Stellar}^{kl}(I),M,R)$  be a cluster of stellar systems of degree $(k,l)$ endowed with a locally monotone mass-radius function $g(M,R,\epsilon)$. Then any upper (resp. lower) bound on the mass and on the radius induces a bound on each MLI state function of $\mathcal{C}$ depending on $\epsilon$ (and possibly on other parameters). If a state function is fully MLI at some $p$, then the bounds descend to the speed of sound near $p$.
\end{theorem}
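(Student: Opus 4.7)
The plan is to chain the two local monotone inversions already available under the hypotheses---one coming from $g$ and one from $f$---in direct analogy with how, in the motivating polytropic example, $\rho_c$ is isolated in (\ref{polimassaraio}) and substituted into (\ref{causalcentro}). First I would apply the local monotonicity of $g$ to obtain, on a neighborhood in $C_{pw}^{0}(\mathrm{Stellar}^{kl}(I))\times C_{pw}^{0}(\mathrm{Stellar}^{kl}(I))$, the representation $\epsilon = \eta(M,R)$ with $\eta$ monotone in each of its two arguments.

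Next I would feed in the boundedness hypothesis. On the subspace $\mathrm{Bound}^{kl}(I)$ where $M$ and $R$ are bounded from above by real numbers $M_0, R_0$ (the lower-bound case is entirely analogous), monotonicity of $\eta$ immediately yields a bound on $\epsilon$ of the form $\epsilon \le \eta(M_0,R_0)$, with the direction of the inequality flipped in any slot where $\eta$ happens to be decreasing rather than increasing. Because $\epsilon$ is a parameter of the MLI state function $f(p,\rho,\epsilon,\ldots)$, this is precisely a constraint cutting out the admissible state functions of $\mathcal{C}$, which is the first assertion. Using the MLI representation $p = \xi(\rho,\epsilon,\ldots)$ with $\xi$ monotone in $\epsilon$, one may equivalently rewrite the bound as an explicit bound on the admissible pressures at each fixed $\rho$.

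For the second assertion I would use the fully MLI hypothesis at $p$, which on a neighborhood of $p$ renders $v^{2}=\partial_{\rho}\xi$ a well-defined function monotone in $\epsilon$. Composing with the bound on $\epsilon$ already obtained gives a bound on $v^{2}$, and hence on $v$, valid near $p$---exactly the descended bound claimed.

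The only genuine obstacle is bookkeeping. Each implicit-function step yields only local neighborhoods, whose intersection one must check is nonempty on $\mathrm{Bound}^{kl}(I)$; and the four possible monotonicity signs for $\eta$ in $(M,R)$, together with the monotonicity signs of $\xi$ and of $v^{2}$, must be tracked so that upper (resp.\ lower) bounds on $M$ and $R$ propagate consistently to upper or lower bounds on $\epsilon$, on $p$, and finally on $v$. Apart from this sign and neighborhood accounting, the proof is a direct composition of the two monotone implicit-function representations already supplied in Section~\ref{definitions}.
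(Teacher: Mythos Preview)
Your proposal is correct and follows essentially the same approach as the paper: solve $g(M,R,\epsilon)=0$ locally for $\epsilon=\eta(M,R)$, use monotonicity of $\eta$ together with the bounds $M\le M_0$, $R\le R_0$ to bound $\epsilon$, and then transfer this bound to the MLI state function and to $v^2$ via their monotone dependence on $\epsilon$. Your additional remarks on tracking monotonicity signs and on the nonemptiness of the intersected neighborhoods are refinements the paper glosses over, but the underlying argument is the same.
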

\begin{proof}
After the previous discussion, the proof becomes easy. We  only work with upper bounds; the proof for lower bounds is essentially the same. By definition, the mass-radius relation $g(M,R,\epsilon)=0$ can be solved in a neighborhood of some $\epsilon$, that is, locally $\epsilon=\eta(M,R)$.  Since $M$ and $R$ are bounded, say $M\leq m$ and $R\leq r$, and $g$ is locally monotone, we see that $\eta(M,R)\leq\eta(m,R)\leq\eta(m,r)$. Therefore, $\epsilon$ is bounded too, say by $\epsilon_{0}$. Now, notice that if a parameter $\epsilon$ is bounded then any monotone function $f(x,\epsilon)$ depending on that parameter is bounded by the function $h(x)=f(x,\epsilon_{0})$. In particular, any MLI function of state $f(p,\rho,\epsilon)$ for the cluster $\mathcal{C}$ depending on $\epsilon$, has an upper bound. Furthermore, if $f$ is fully MLI at some $p$, then the speed of sound squared $v(\rho,\epsilon)$ near $p$ is well defined and it also depends monotonically of $\epsilon$, so that it also has an upper bound.
\end{proof}

The above theorem only applies for clusters whose function of state and mass-radius function depend on the same parameter. This hypothesis can be avoided by adding some continuity equation. Let $\mathcal{C}=(\mathrm{Stellar}^{kl}(I),M,R)$ be a cluster of stellar systems in which $M$ is at least piecewise $C^1$. A \emph{continuity equation} for $\mathcal{C}$ is an ordinary differential equation $M'(R)=F(R,\rho)$. We will  only work with continuity equations which can be locally solved for $\rho$. The basic example is equation (\ref{tov2}).

\begin{theorem}\label{bound2}
Let  $\mathcal{C}=(\mathrm{Stellar}^{kl}(I),M,R)$  be a cluster of stellar systems of degree $(k,l)$ endowed with a locally monotone mass-radius function $g(M,R,\delta)$ and a continuity equation. Then any upper (resp. lower) bound on the derivative of the mass and on the radius induces a bound on each MLI state function of $\mathcal{C}$. If a state function is fully MLI at some $p$, then bounds are induced on the speed of sound near $p$.
\end{theorem}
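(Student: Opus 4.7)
The plan is to mirror the proof of Theorem \ref{bound}, substituting the continuity equation for the mass-radius function as the bridge that transports the hypothesized bounds to a variable on which every state function depends. Whereas Theorem \ref{bound} forced the state function and the mass-radius function to share the parameter $\epsilon$, the continuity equation $M'(R)=F(R,\rho)$ is written directly in terms of the radius and the density, so the bridge now carries the bounds on $M'$ and $R$ directly to a bound on $\rho$, which is a variable common to every MLI state function; this is precisely what removes the same-parameter hypothesis of Theorem \ref{bound}.

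Concretely, I would first invoke the hypothesis that the continuity equation can be locally solved for $\rho$, writing $\rho=\phi(R,M'(R))$ on a neighborhood, and argue---as in the analogous discussion for locally monotone mass-radius functions---that $\phi$ is monotone in each variable, either by folding this property into the definition or, when the Banach IFT applies, by requiring nonvanishing partial derivatives of $F$. Given upper bounds $R\leq r$ and $M'\leq m'$, monotonicity then yields $\rho\leq\phi(r,m')=:\rho_{0}$. For an arbitrary MLI state function $f(p,\rho,\epsilon)$ with local inverse $p=\xi(\rho,\epsilon)$, I would recycle the observation from the proof of Theorem \ref{bound}: since $f$ inherits from the MLI condition (via $\partial_{p}f\neq 0$ and the chain rule) monotonicity in $\rho$, it is bounded by $h(p,\epsilon):=f(p,\rho_{0},\epsilon)$, independently of which parameter the state function carries.

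For the last assertion, if $f$ is fully MLI at some $p$ then $v^{2}=\partial_{\rho}\xi$ is defined and monotone in $\rho$ in a neighborhood, so the bound $\rho\leq\rho_{0}$ descends to $v(\rho,\epsilon)\leq v(\rho_{0},\epsilon)$ near $p$. The lower-bound case is entirely analogous, and the mass-radius function $g$ plays no active role beyond ensuring that $M$ and $R$ are well-defined on $\mathcal{C}$ so that $M'(R)$ makes sense as a function. The main obstacle I anticipate is justifying the monotonicity of $\phi$ cleanly: the paper so far only assumes local solvability of the continuity equation, and without some monotonicity hypothesis on $F$ the very first step of the chain would fail. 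I would resolve this either by adding a ``locally monotonically solvable'' clause to the definition of continuity equation---paralleling the ``locally monotone'' adjective used for mass-radius functions---or by extracting the monotonicity of $\phi$ from nonvanishing partial derivatives of $F$ via the IFT, after which the rest of the argument is a direct translation of the proof of Theorem \ref{bound}.
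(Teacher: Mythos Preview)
Your argument is sound and reaches the conclusion, but it takes a different route from the paper's, and your claim that the mass-radius function $g$ ``plays no active role'' is precisely where the two diverge. The paper opens by invoking the local monotonicity of $g$ to solve $g(M,R,\delta)=0$ for $M$, writing $M=M(R,\delta)$; it then substitutes $\partial_{R}M(R,\delta)$ for $M'$ in the continuity equation and solves for $\rho$, so that $\rho$ is expressed as a function of $R$ and the mass-radius parameter $\delta$. The hypothesized bound on $M'$ is accordingly read as $\partial_{R}M(R,\delta)\leq M_{0}(\delta)$, and the resulting bound on the state function carries this $\delta$-dependence. Your approach instead treats the bounds on $M'$ and $R$ as raw numerical inputs and produces $\rho\leq\phi(r,m')$ with no reference to $\delta$ at all. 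Both are legitimate: yours is more direct and avoids ever solving the mass-radius relation, while the paper's route makes the dependence on $\delta$ explicit---which is exactly what the applications in Section~\ref{applications} exploit (e.g.\ deriving $\rho(R,a,b)=abR^{b-3}/4\pi$ from the monomial relation before imposing bounds). Your caution about the monotonicity of $\phi$ is well placed; the paper glosses over the analogous step.
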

\begin{proof}
The proof is similar. Because the mass-radius function $g(M,R,\delta)$ is locally monotone, it can be locally solved for $M$, allowing us to write $M(R,\delta)$. From the continuity equation, we have $\partial _R M(R,\delta)=F(R,\rho)$. On the other hand, the RHS can be solved for $\rho$ as $\rho (R)=\partial _R M(R,\delta)$. It follows that  any bound $\partial _R M(R,\delta)\leq M_0 (\delta)$ induces a bound  $\rho (R)\leq M_0 (\delta)$. Consequently, if $f(p,\rho,\epsilon)$ is a MLI function of state, we have the desired bound $f(p,\rho,\epsilon)\leq M_0 (\delta)$, which clearly descends to the speed of sound when $f$ is fully MLI.
\end{proof}

In the last two theorems we showed that bounds on the mass-radius relations induce bounds on the functions of state, which descend to the speed of sound. On the other hand, additional conditions could be imposed \emph{a priori} on the speed of sound, such as causal conditions. We show that these conditions lift to new bounds on the state functions. In order to do this, given a fully MLI function of state $f(p,\rho,\delta)$, let us define a \emph{causal condition} for $f$ near $p$ as an upper bound $v^2(\rho,\epsilon)\leq v_0^2(\rho,\epsilon)$ for the speed of sound near $p$. We assume that both $v^2$ and $v_0^2$ are \emph{locally invertible}, meaning that we can locally invert $v^2(\rho,\epsilon)$ to write $\epsilon(v^2,\rho)$, and similarly for $v_0^2$. Restricting to Banach space of parameters $E$,  this can be formally described through the IFT for Banach spaces. 

\begin{theorem}\label{bound3}
Let  $\mathcal{C}=(\mathrm{Stellar}^{kl}(I),M,R)$  be a cluster of stellar systems of degree $(k,l)$, with $M$ piecewise $C^2$, endowed with a fully MLI state function $f$ and with a mass-radius function $g$ fulfilling a continuity equation. Any locally monotone invertible causal condition on $f$ and any upper bound on the radius induce an upper bound on $f$ depending only on the additional parameters of $g$. 
\end{theorem}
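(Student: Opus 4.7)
The plan is to chain together the constructions used in the proof of Theorem \ref{bound2} with the extra input provided by the causal condition, in order to push the chain of bounds one further step so that the final estimate no longer involves $\epsilon$. First, exactly as in the proof of Theorem \ref{bound2}, I would combine the mass-radius function $g(M,R,\delta)$ with the continuity equation to write $\rho(R)=\partial_R M(R,\delta)$; this realises $\rho$ as a function of $R$ and of the auxiliary parameters $\delta$ of $g$.

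Second, I would apply the hypothesis $R\leq r$. Because $M$ is piecewise $C^{2}$, the function $\partial_R M(\cdot,\delta)$ is itself piecewise $C^{1}$, and the monotonicity built into ``locally monotone'' for $g$ (which transfers, via the IFT expression for the derivative of $\eta$, to a definite sign for $\partial_R^{2}M$) gives the monotonicity of $\partial_R M$ in $R$. One then obtains $\rho(R)\leq\partial_R M(r,\delta)=:\rho_0(\delta)$, so that $\rho$ is bounded by a quantity depending only on the additional parameters of $g$.

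Third, the causal condition $v^{2}(\rho,\epsilon)\leq v_0^{2}(\rho,\epsilon)$ combined with the hypothesised local invertibility of $v_0^{2}$ in $\epsilon$ yields $\epsilon\leq\epsilon_{*}(\rho)$ for some explicit $\epsilon_{*}$, obtained by inverting $v_0^{2}$ and using its monotonicity in $\epsilon$. Substituting $\rho\leq\rho_0(\delta)$ from the previous step, and using monotonicity of $\epsilon_*$ once more, one arrives at $\epsilon\leq\epsilon_0(\delta)$. Finally, since $f$ is MLI it is monotone in $\epsilon$, so the bound lifts to $f(p,\rho,\epsilon)\leq f(p,\rho_0(\delta),\epsilon_0(\delta))$, whose right-hand side depends only on the auxiliary parameters $\delta$ of $g$, as claimed.

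The main obstacle I anticipate is the careful bookkeeping of the monotonicity directions: each of the three inversions (of $g$ in $M$, of the continuity equation in $\rho$, and of $v_0^{2}$ in $\epsilon$) can flip the sense of the inequality, and one must verify that the hypotheses \emph{locally monotone} for $g$, \emph{fully MLI} for $f$, and \emph{locally monotone invertible} for the causal condition are jointly strong enough to force all the signs to cooperate into a single consistent upper bound. Everything else is essentially a reassembly of pieces already established in the proofs of Theorems \ref{bound} and \ref{bound2}.
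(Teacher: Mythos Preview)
Your proposal is essentially correct and follows the same overall strategy as the paper: express $\rho$ via the continuity equation and the locally solved mass--radius function, feed this into the causal condition, invert to bound $\epsilon$ in terms of $\delta$, and then use the MLI property to bound $f$. The only difference is the order in which you apply the radius bound: you first bound $\rho(R)\leq\rho_0(\delta)$ and then insert this into the inverted causal condition, whereas the paper first substitutes $\rho=\rho(R,\delta)$ directly into $v^2$ to obtain $v^2(R,\delta,\epsilon)$ and only afterwards applies $R\leq R_0$ via the monotonicity of $v_0^2$. Both orderings ultimately rely on the same monotonicity of the composite in $R$; your version makes this dependence explicit (through the sign of $\partial_R^2 M$), while the paper simply invokes ``the monotone property of $v_0^2$'' without unpacking that the composition with $\rho(R,\delta)$ is what is really needed. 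Your caveat about bookkeeping the inequality directions is exactly the right concern, and at the level of rigor of the paper your argument is complete.
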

\begin{proof}
Once more we start by writing $\rho (R)=\partial _R M(R,\delta)$. The difference is that, instead of using bounds on the right-hand side to get bounds on $f$, we consider the speed of sound $v^2(\rho,\epsilon)$, which becomes $v^2(R,\delta,\epsilon)$. If we have a bound for $v^2$ we can write $v^2(R,\delta,\epsilon)\leq v_0^2(R,\delta,\epsilon)$, translating to a bound on $\epsilon$ via the invertibility hypothesis. If $R\leq R_0$ is a bound on the radius, from the monotone property of $v_0^2$ we obtain $\epsilon (R,v^2\delta)\leq \epsilon_0(\delta)$, where $\epsilon_0(\delta)=\epsilon (R_0,v_0^2\delta)$. This gives the desired bound on $f$, due to the MLI hypothesis.  
\end{proof}

\section{Applications}\label{applications}

Here, we apply Theorem \ref{bound3} to stellar systems inspired in experimental mass-radius relations. By this we mean relations found in the astrophysical literature as optimal approximations to experimental data describing zero age main sequence (ZAMS) stars  and terminal age main sequence (TAMS) stars. These stars satisfy polytropic equations of state and the causal condition, so Theorem \ref{bound3} can be applied.

\subsection{Monomial mass-radius relations}\label{monomial}
A simple model for mass-radius relations of ZAMS and TAMS stars was developed in \cite{RMR_polinomial_1, RMR_polinomial_2} by considering  monomial polynomials. The model was also applied recently to neutron stars in \cite{RMR_polinomial_3}). This means that in the cluster $\mathcal{C}=(\mathrm{Stellar}^{kl}(I),M,R)$ of ZAMS and TAMS stars we have a mass-radius relation given by a function
\begin{equation}\label{RMR_polinomial}
g(M,R,a,b) = M-aR^b,
\end{equation}
where $a,b$ are real functions on $\mathrm{Stellar}^{kl}(I)$, just as $R,M$. By definition, a stellar system belongs to the main sequence (MS) when the temperature at its nucleus is high enough to enable hydrogen fusion in such a way that the system becomes stable. This generally happens if the mass is at least $0.1\,M_{\odot}$ \cite{stellar_structure}, so these systems are naturally endowed with a lower bound on the mass. On the other hand, the mass of a MS-star determines many of its properties, such as luminosity and MS lifetime. In other words, different classes of MS-stars are characterized by upper bounds $M\leq M_0$. 

Additionally, stars at the beginning of the MS have smaller radius than those near MS's end. Therefore, ZAMS and TAMS have intrinsic bounds $R\leq R_0$. Furthermore, each partial derivative of $g$ in (\ref{RMR_polinomial}) does not vanish except at $R=0$ and $b=0$; the point $R=0$ is excluded by the bound $M\geq 0.1\,M_{\odot}$, and $b=0$ is excluded from experimental data. Therefore, Theorem \ref{bound} applies, giving the following corollary:
\begin{corollary}\label{corolario1}
Let $\mathcal{C}=(\mathrm{Stellar}^{kl}(I),M,R)$ be a cluster of ZAMS or TAMS stars fulfilling natural upper bounds $M\leq M_0$ and $R\leq R_0$. Then any function of state of $\mathcal{C}$ depending monotonically on $a$ and $b$ is bounded from above. 
\end{corollary}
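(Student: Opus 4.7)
The plan is to recognize the corollary as a direct application of Theorem \ref{bound} to the mass-radius function $g(M,R,a,b) = M - aR^b$, with parameter $\epsilon = (a,b)$. The conclusion of Theorem \ref{bound} is precisely the statement that every MLI state function depending on $\epsilon$ acquires an upper bound, so once I check that $g$ is a locally monotone mass-radius function and that the bounds $M \leq M_0$ and $R \leq R_0$ are in force on the cluster, the result drops out without further work.

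First I would verify local monotonicity of $g$. Since the parameter space $F = \mathbb{R}^2$ is Banach, by the IFT-based criterion stated in Section \ref{definitions} it is enough to check that some partial derivative $\partial_{\delta} g$ is nonzero (so that $g = 0$ is locally solvable as $\delta = \eta(M,R)$) and that the remaining partials of $g$ are nonzero on the same neighborhood (so that $\eta$ is monotone). A direct computation gives
\begin{equation*}
\partial_{M} g = 1, \quad \partial_{R} g = -ab R^{b-1}, \quad \partial_{a} g = -R^{b}, \quad \partial_{b} g = -a R^{b} \ln R.
\end{equation*}
The only loci of vanishing are $R = 0$, $a = 0$, and $b = 0$, together with the isolated issue $R = 1$ for $\partial_{b} g$, which is avoided by instead solving $g = 0$ for $a$ where $\partial_{a} g \neq 0$. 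The locus $R = 0$ is ruled out by the physical lower bound $M \geq 0.1\, M_{\odot}$ combined with $g(M,R,a,b) = 0$, which forces $aR^{b} = M > 0$; the locus $b = 0$ is excluded by the experimental fits quoted in \cite{RMR_polinomial_1, RMR_polinomial_2, RMR_polinomial_3}; and $a = 0$ is ruled out for the same positivity reason.

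With local monotonicity established and the natural upper bounds $M \leq M_{0}$, $R \leq R_{0}$ provided as hypotheses, all assumptions of Theorem \ref{bound} hold for $\mathcal{C}$ with parameter $\epsilon = (a,b)$, and invoking the theorem yields the claimed upper bound on every MLI state function depending monotonically on $(a,b)$. The main obstacle I anticipate is not in the chain of deductions but in justifying the exclusion of the degenerate loci $\{R = 0\}$, $\{a = 0\}$, $\{b = 0\}$: these exclusions rest on physical and experimental input rather than on intrinsic properties of $g$, so the rigorous reading of the corollary should be understood as restricted to the subcluster of $\mathrm{Stellar}^{kl}(I)$ on which the fitted parameters $(a,b,R)$ stay in the region of local monotonicity.
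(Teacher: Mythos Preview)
Your proposal is correct and follows essentially the same route as the paper: verify that the partial derivatives of $g(M,R,a,b)=M-aR^{b}$ are nonvanishing away from degenerate loci, exclude those loci via the physical lower bound $M\geq 0.1\,M_{\odot}$ and the experimental constraint $b\neq 0$, and then invoke Theorem~\ref{bound}. You are in fact slightly more careful than the paper, which overlooks the loci $a=0$ and $R=1$ (where $\partial_{b}g$ vanishes) that you explicitly address.
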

Because ZAMS and TAMS stars are MS-stars, they follow Eddington's standard model, which means that polytropic state functions $f=p-K\rho^{\gamma}$ are good models to be chosen. In order to use Corollary \ref{corolario1} on $f$, we need some dependence on $a$ and $b$. If the dependence is on $K$, i.e, if $K(a,b)$ is a monotone function, Corollary \ref{corolario1} leads to bounds on $\gamma$ in terms of $b$, $\rho(R_0)$ and $P(R_0)$. If the dependence is on $\gamma$, we find bounds on $K$ in terms of the same parameters. Finally, if both $K$ and $\gamma$ depend of $a,b$, we get bounds on $b$ in terms of $\rho(R_0)$ and $P(R_0)$.

In Eddington's standard model, the MS-stars fulfill the continuity equation (\ref{tov2}), allowing us to apply Theorem \ref{bound2} and find bounds even when  $K$ and $\gamma$ do not depend on $a,b$. Explicitly, Equation (\ref{tov2}) combined with the mass-radius function (\cite{RMR_polinomial_1}) allows us to write the density as
\begin{equation}\label{RMR_polinomial_densidade}
\rho(R,a,b) = \dfrac{abR^{b-3}}{4\pi}
\end{equation}
The speed of the sound within the star becomes
\begin{equation}\label{speed_sound_RMR}
v(R,a,b) = k\left(\frac{ab}{4\pi}\right)^\beta R^{\beta-1} \; \text{with}\; \beta = \gamma(b-3)
\end{equation}
If we assume bounds on $M'$ we get bounds on (\ref{RMR_polinomial_densidade}) and, therefore, on the parameters of the polytropic state equation, as well as on the speed of sound (\ref{speed_sound_RMR}), as ensured by Theorem \ref{bound2}. For instance, if $M'\leq M_0$ and we are working only with stars of radius $R\leq R_0$ we find that $\gamma$ must satisfy $M_0^{1/\gamma} \geq abR_0^{b-1}$. This can also be understood as an upper bound on the radius that a polytropic MS-star of mass $M_0$ fulfilling the mass-radius relation (\ref{RMR_polinomial_densidade}) may have: $R_0 \leq (\frac{M_0 ^{1/\gamma}}{ab})^{1/(b-1)}$. Just to illustrate, for a massive ZAMS stars, say with $M_0 \approx 120\,M_{\odot}$, we have \cite{RMR_polinomial_1} $\gamma =3$, $a=0.85$ and $b=0.67$, so that $b-1<0$, implying that $R_0$ very small, agreeing with massive MS-stars only staying a short time as ZAMS stars. On the other hand, $b=1.78$ for $M_0 \approx 120\,M_{\odot}$ TAMS stars, meaning that in the terminal stage massive MS-stars may have a large radius.

We could also go in the direction of Theorem \ref{bound3} and use causal conditions (instead of conditions on $M'$) to get bounds. In natural units, the canonical choice of causal condition is $v< 1$. Assuming it and working in the same regime $R\leq R_0$,
\begin{equation}
k < \left[\left(\frac{ab}{4\pi}\right)^\beta R_0^{\beta-1}\right]^{-1} \; \text{with}\; \beta = \gamma(b-3).
\end{equation}

\subsection{Rational function mass-radius relation}
In Subsection \ref{monomial}, we studied monomial mass-radius relations for ZAMS and TAMS stars, found in \cite{RMR_polinomial_1,RMR_polinomial_2} following analysis of experimental data. However, for ZAMS stars with luminosity $Z = 0.02$, the monomial relation can be replaced by a rational one, as pointed in \cite{RMR_racional_1, RMR_racional_2}. So, let us consider mass-radius functions
\begin{equation}\label{RMR_racional}
g(M,R,a_i,c_i,b_i,d_i) = p(M,a_i,c_i) - Rq(M,b_i,d_i),
\end{equation}
on a cluster of stellar systems, where $p(M) =\sum_i a_iM^{b_i}$ and $q(M) = \sum_j c_jM^{d_j}$ are polynomials and $a_i, b_i,c_i,d_i$ are real functions on the cluster. We can apply Theorems \ref{bound}, \ref{bound2} and \ref{bound3} to find constraints on the possible state functions. Since we are interested in MS-stars, we study the distinguished polytropic state function. Let us see how Theorems \ref{bound2} and \ref{bound3} works in that case. Notice that $g(M,R,a_i,c_i,b_i,d_i)=0$ can be globally solved for $R$ as
\begin{equation}
R(M,a_i,c_i,b_i,d_i) = \frac{p(M,a_i,c_i)}{q(M,b_i,d_i)}.
\end{equation}
This function is clearly piecewise $C^1$ and, for each fixed $A\equiv (a_i,b_i,c_i,d_i)$, it is singular in a finite number of points. In the neighborhood of each regular point, $R(M,a_i,b_i,c_i,d_i)$ can be inverted for $M$, so that we have $M(R,A)$. Using the continuity equation (\ref{tov2}), we can write $\rho(R,A,M') = M'/4\pi R^2$. Setting bounds $M'\leq M_0$ and working with $R=R_0$, we get a bound $\rho \leq M^0/4pi R_0^2$ and, therefore, a bound on the polytropic function of state, as in Theorem \ref{bound2}. If instead of $M'\leq M_0$ we impose the canonical causal condition $v^2=k\gamma \rho^{\gamma -1}<1$, recalling that $\rho(R,A,M'(R,A))$, we get the constraint 
\begin{equation}
k<\frac{1}{\gamma \rho^{\gamma-1}_0(A)},\; \text{where}\; \rho_0(A)=\rho(R_0,A),
\end{equation}
exactly as in Theorem \ref{bound3}.

\section{Concluding Remarks}
We established, axiomatically and in great generality, that clusters with finite mass and finite radius can only accommodate mass-radius relations and equations of state simultaneously when certain bounds are satisfied.

Mass-radius relations can be found experimentally, while equations of state arise from theoretical modeling. In this context, we conclude that experimental data constrains \emph{a priori} the possible theoretical models. We believe that these general results point towards an axiomatic formulation of Astrophysics, a problem pointed and extensively worked by Chandrasekhar.     
\addtolength{\textheight}{-12cm}   

\onecolumngrid

\end{document}